\def\qed{\leavevmode\unskip\penalty9999 \hbox{}\nobreak\hfill
     \quad\hbox{\leavevmode  \hbox to.77778em{%
               \hfil\vrule   \vbox to.675em%
               {\hrule width.6em\vfil\hrule}\vrule\hfil}}
     \par\vskip3pt}
\newtheorem{theorem}{Theorem}
\newtheorem{corollary}{Corollary}
\newtheorem{lemma}{Lemma}
\newtheorem{definition}{Definition}
\begin{document}

\title{The $l_1$ Norm of Coherence of Assistance}

\smallskip
\author{Ming-Jing Zhao$^1$}
\author{Teng Ma$^{2,3}$ }
\author{Quan Quan$^{4}$ }
\author{Heng Fan$^5$ }
\author{Rajesh Pereira$^6$ }

\affiliation{
$^1$School of Science,
Beijing Information Science and Technology University, Beijing, 100192, China\\
$^2$ Shenzhen Institute for Quantum Science and Engineering and Department of Physics,
South University of Science and Technology of China, Shenzhen 518055,
China\\
$^3$Shenzhen Key Laboratory of Quantum Science and Engineering, Shenzhen 518055, China\\
$^4$State Key Laboratory of Low-Dimensional Quantum Physics and Department of Physics, Tsinghua University,
Beijing, 100084, China\\
$^5$Institute of Physics, Chinese Academy of Sciences, Beijing 100190, China\\
$^6$ Department of Mathematics and Statistics, University of Guelph, N1G2W1, Canada\\
}

\pacs{03.65.Ud, 03.67.-a}

\begin{abstract}
We introduce and study the $l_1$ norm of coherence of assistance both theoretically and operationally.
We first provide an upper bound for the $l_1$ norm of coherence of assistance and show a necessary and sufficient condition for the saturation of the upper bound. For two and three dimensional quantum states, the analytical expression of the $l_1$ norm of coherence of assistance is given. Operationally, the mixed quantum coherence can always be increased with the help of another party's local measurement and one way classical communication since the $l_1$ norm of coherence of assistance, as well as the relative entropy of coherence of assistance, is shown to be strictly larger than the original coherence. The relation between the $l_1$ norm of coherence of assistance and entanglement is revealed. Finally, a comparison between the $l_1$ norm of coherence of assistance and the relative entropy of coherence of assistance is made.
\end{abstract}

\maketitle

\section{Introduction}

Quantum coherence is an important feature in quantum physics and is of practical significance in quantum computation and quantum communication \cite{A. Streltsov-rev,E. Chitambar-review,M. Hu,M. Hillery}.
The formulation of the resource theory of coherence was initiated in Ref. \cite{T. Baumgratz}, in which some intuitive and
computable measures of coherence are identified, for example, the $l_1$ norm of coherence and the relative entropy of coherence. These coherence measures quantify coherence by using the minimal distance between the quantum state and the set of incoherent states. Intrinsic randomness of coherence \cite{X. Yuan} and coherence concurrence \cite{X. Qi} are coherence measures defined using the convex roof construction.
Robustness of coherence is a coherence monotone and quantifies the minimal mixing required to make the state incoherent \cite{C. Napoli}, of which the witness observable has been demonstrated \cite{W. Zheng}.

Since quantum coherence is a quantum resource, the interconversion between coherent states is a major focus study. Coherence distillation is a
transformation process which transforms a
quantum state to the maximally coherent state by incoherent operations.
The reverse transformation from the maximally coherent state to a general quantum state is known as coherence cost \cite{A. Winter}.
The distillable coherence has been proven to be equal to the relative entropy of coherence and the coherence cost has been proven to be equal to the coherence formation \cite{A. Winter}.
Since the coherence distillation can not be accomplished with certainty, the framework of probabilistic
coherence distillation characterizing the relation between the
maximal success probability and the fidelity of distillation in
the one-shot setting has been developed \cite{K. Fang}.

Similar to standard coherence distillation, assisted coherence distillation is another process aided by another party typically holding a purifying quantum state and performing local measurement and one way classical communication \cite{E. Chitambar}.
A mathematical framework for the characterization of the assisted coherence distillation is then proposed in Ref. \cite{B. Regula}. These characterizations imply that the best achievable rate of assisted coherence distillation is the same no matter which class of incoherent operations the assistant performs \cite{B. Regula}.
The assisted coherence distillation has been generalized to more general
settings where two parties both can perform local measurements and communicate
their outcomes via a classical channel \cite{A. Streltsov-2017} and the assisted coherence distillation of some mixed states has been discussed in Ref. \cite{X. L. Wang}.

In the context of assisted coherence distillation, the quantity called coherence of assistance is introduced to characterize the distillation rate assisted by another party \cite{E. Chitambar}, which is defined by the maximal average relative entropy of coherence and we it denote as the relative entropy of coherence of assistance throughout our paper to avoid confusion. Analogously, we introduce the maximal average $l_1$ norm of coherence and denote it as the $l_1$ norm of coherence of assistance.

One reason we study the coherence of assistance based on the $l_1$ norm of coherence is that the $l_1$ norm of coherence is usually easy to evaluate and algebraically manipulate for a given quantum state. Any continuous weak coherence monotone which is a symmetric function of nonzero off-diagonal entries of the state must be a nondecreasing function of the $l_1$ norm of coherence \cite{H. Zhu}. Furthermore, the $l_1$ norm of coherence is an important link between different coherence measures and entanglement. For example, the $l_1$ norm of coherence is equal to the robustness of coherence for qubit states and acts as an upper bound for the robustness of coherence in high dimensional system \cite{C. Napoli}. The logarithmic $l_1$ norm of coherence is an upper bound for the relative entropy of coherence \cite{S. Rana}. Additionally the $l_1$ norm of coherence is the maximum entanglement created by incoherent operations acting on the system and an incoherent ancilla \cite{H. Zhu}. Motivated by the usefulness of the $l_1$ norm of coherence, we aim to give a characterization for the corresponding coherence of assistance.

In this paper, we study the $l_1$ norm of coherence of assistance theoretically and operationally. We first provide an upper bound of the $l_1$ norm of coherence of assistance in terms of the function of diagonal entries of quantum state. The necessary and sufficient condition when the $l_1$ norm of coherence of assistance attains this upper bound is shown. In the special case of two and three dimensional quantum states, the $l_1$ norm of coherence of assistance always achieves its upper bound.
Then we show that the $l_1$ norm of coherence of assistance as well as the relative entropy of coherence of assistance is strictly larger than
the original coherence for mixed states.
The relation between the $l_1$ norm of coherence of assistance and entanglement is revealed. Finally, a comparison between the $l_1$ norm of coherence of assistance and the relative entropy of coherence of assistance is made.

\section{The $l_1$ Norm of Coherence of Assistance}

Under a fixed reference basis $\{|i\rangle\}$,
a quantum state $\rho$ is said to be incoherent if the state is diagonal in this basis, i.e. $\rho=\sum_{i} \rho_{i}|i\rangle \langle i|$. Otherwise the quantum state is coherent. For coherent states, the $l_1$ norm of coherence and the relative entropy of coherence are two commonly used coherence measures \cite{T. Baumgratz}.
The $l_1$ norm of coherence of the quantum state $\rho=\sum_{i,j} \rho_{ij}|i\rangle \langle j|$ is the sum of the magnitudes of all the off diagonal entries
\begin{equation}
C_{l_1}(\rho)=\sum_{i\neq j} |\rho_{ij}|.
\end{equation}
The relative entropy of coherence is the difference of  von Neumann entropy between the density matrix and the diagonal matrix given by its diagonal entries,
\begin{equation}
C_r(\rho)=S(\Delta(\rho))-S(\rho),
\end{equation}
where $\Delta(\rho)$ denotes the state given by the diagonal entries of $\rho$, $S(\rho)$ is the von Neumann entropy.
Later, the relative entropy of coherence of assistance is introduced as the maximal average relative entropy of coherence
\begin{eqnarray}
C_a^{r}(\rho)=\max \sum_k p_k C_r(|\psi_k\rangle),
\end{eqnarray}
where the maximization is taken over all pure state decompositions of $\rho=\sum_k p_k |\psi_k\rangle\langle\psi_k|$ \cite{E. Chitambar}.

Inspired by the relative entropy of coherence of assistance, we introduce the
$l_1$ norm of coherence of assistance as the maximal average $l_1$ norm of coherence.
\begin{definition}
For any quantum state $\rho$, its $l_1$ norm of coherence of assistance is defined as
\begin{equation}
C_a^{l_1}(\rho)=\max \sum_k p_k C_{l_1}(|\psi_k\rangle),
\end{equation}
where the maximization is taken over all pure state decompositions of $\rho=\sum_k p_k |\psi_k\rangle\langle\psi_k|$.
\end{definition}

Analogous to the relative entropy of coherence of assistance, the $l_1$ norm of coherence of assistance $C_a^{l_1}$ has an operational interpretation.
Suppose Alice holds a state $\rho^A$ with the $l_1$ norm of coherence $C_{l_1}(\rho^A)$. Bob holds another part of the purified state of $\rho^A$. With the help of Bob performing local measurements and informing Alice of his measurement outcomes using classical communication, Alice's quantum state will be in one pure state ensemble $\{ p_k,\  |\psi_k\rangle\}$ with the $l_1$ norm of coherence $\sum_k p_k C_{l_1}(|\psi_k\rangle)$. The $l_1$ norm of coherence of Alice's state is then increased from $C_{l_1}(\rho^A)$ to $\sum_k p_k C_{l_1}(|\psi_k\rangle)$ since the $l_1$ norm of coherence is a convex function. Maximally, the $l_1$ norm of coherence can be increased to $C_a^{l_1}(\rho^A)$ in this process.

Due to the use of optimization in the definition, both the $l_1$ norm of coherence of assistance and the relative entropy of coherence of assistance are usually difficult to calculate.
However, for the relative entropy of coherence of assistance, it is bounded from above by $S(\Delta(\rho))$ \cite{E. Chitambar},
$C_a^{r}(\rho)\leq S(\Delta(\rho))$,
and equality holds if and only if there is a pure state decomposition $\{p_k,\  |\psi_k\rangle\}$ of $\rho$ such that $\Delta(|\psi_k\rangle\langle\psi_k|)=\Delta(\rho)$ for all $k$ \cite{M. J. Zhao}. For the $l_1$ norm of coherence of assistance, it is obvious that $0\leq C_a^{l_1}(\rho)\leq n-1$ for $n$ dimensional quantum state $\rho$. $C_a^{l_1}(\rho)=0$ if and only if $\rho$ is incoherent pure state. $C_a^{l_1}(\rho)\leq n-1$ because the maximum of the $l_1$ norm of coherence of $n$ dimensional quantum state is $n-1$.
Now we show an analytical upper bound for the $l_1$ norm of coherence of assistance.

\begin{theorem}\label{th ca upper bound}
The $l_1$ norm of coherence of assistance of $\rho=\sum \rho_{ij}|i\rangle \langle j|$ is restricted by its diagonal entries as
\begin{equation}\label{eq upper bound}
C_a^{l_1}(\rho)\leq \sum_{i\neq j}\sqrt{\rho_{ii}\rho_{jj}}.
\end{equation}
Equality holds if and only if there exist a pure state decomposition $\{p_k,\  |\psi_k\rangle\}$ of $\rho$ such that $\Delta(|\psi_k\rangle\langle\psi_k|)=\Delta(\rho)$ for all $k$.
\end{theorem}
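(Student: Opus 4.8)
The plan is to reduce the whole statement to a family of Cauchy--Schwarz inequalities, one for each off-diagonal position $(i,j)$. First I would record the explicit form of the $l_1$ norm of coherence on a pure state: writing $|\psi\rangle=\sum_i c_i|i\rangle$ in the reference basis, one has $C_{l_1}(|\psi\rangle)=\sum_{i\neq j}|c_i||c_j|$. Hence for an arbitrary pure-state decomposition $\rho=\sum_k p_k|\psi_k\rangle\langle\psi_k|$ with $|\psi_k\rangle=\sum_i c_i^{(k)}|i\rangle$, the average coherence rearranges as $\sum_k p_k C_{l_1}(|\psi_k\rangle)=\sum_{i\neq j}\sum_k p_k|c_i^{(k)}||c_j^{(k)}|$, which isolates the contribution of each pair $(i,j)$ and uses only the relation $\rho_{ii}=\sum_k p_k|c_i^{(k)}|^2$.

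For the upper bound I would apply the (weighted) Cauchy--Schwarz inequality to each inner sum,
\[
\sum_k p_k|c_i^{(k)}||c_j^{(k)}|\le\Bigl(\sum_k p_k|c_i^{(k)}|^2\Bigr)^{1/2}\Bigl(\sum_k p_k|c_j^{(k)}|^2\Bigr)^{1/2}=\sqrt{\rho_{ii}\rho_{jj}},
\]
then sum over $i\neq j$ and take the supremum over all decompositions to obtain $C_a^{l_1}(\rho)\le\sum_{i\neq j}\sqrt{\rho_{ii}\rho_{jj}}$.

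For the equality characterization, the ``if'' direction is immediate: if a decomposition satisfies $\Delta(|\psi_k\rangle\langle\psi_k|)=\Delta(\rho)$, i.e. $|c_i^{(k)}|^2=\rho_{ii}$ for every $i,k$, then each pairwise term already equals $\sqrt{\rho_{ii}\rho_{jj}}$ before averaging, so this decomposition saturates the bound and therefore so does the maximum. For the ``only if'' direction I would first note that the maximum in the definition of $C_a^{l_1}$ is attained: by Carath\'eodory applied to the (compact) graph of the continuous function $C_{l_1}$ over the set of pure states, it suffices to consider decompositions with a bounded number of terms, and on that compact set the continuous objective attains its maximum. On an optimal decomposition that saturates the bound, every one of the Cauchy--Schwarz inequalities above must be an equality, so for each pair $(i,j)$ the nonnegative vectors $(\sqrt{p_k}\,|c_i^{(k)}|)_k$ and $(\sqrt{p_k}\,|c_j^{(k)}|)_k$ are proportional; since this holds for all pairs, all of these vectors are parallel to one common vector. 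Combining this with the normalization $\sum_i|c_i^{(k)}|^2=1$ forces $|c_i^{(k)}|^2$ to be independent of $k$, whence $\rho_{ii}=\sum_k p_k|c_i^{(k)}|^2=|c_i^{(k)}|^2$, i.e. $\Delta(|\psi_k\rangle\langle\psi_k|)=\Delta(\rho)$ for all $k$ (the positions with $\rho_{ii}=0$, where $c_i^{(k)}=0$ for all $k$, being trivial).

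I expect the main obstacle to be the bookkeeping in the ``only if'' part: justifying that the maximum is genuinely attained, and converting the collection of Cauchy--Schwarz equality conditions into the single clean statement about the diagonal parts while correctly handling the degenerate positions $\rho_{ii}=0$. The upper bound itself and the ``if'' direction are routine once the pure-state formula and the pairwise rearrangement are in place.
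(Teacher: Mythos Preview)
Your proposal is correct and follows essentially the same route as the paper's own proof: both rewrite the average $l_1$ coherence as $\sum_{i\neq j}\sum_k p_k|c_i^{(k)}||c_j^{(k)}|$, apply Cauchy--Schwarz termwise, and read off the equality characterization from the proportionality of the resulting vectors together with the normalization of each $|\psi_k\rangle$. You are in fact more careful than the paper on two points it leaves implicit --- the attainment of the maximum (which you justify via Carath\'eodory and compactness) and the handling of the degenerate positions $\rho_{ii}=0$.
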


\begin{proof}
Suppose $\{p_k,\  |\psi_k\rangle\}$ is an optimal decomposition for $\rho$ such that $C_a^{l_1}(\rho)= \sum_k p_k C_{l_1}(|\psi_k\rangle)$. Since $|\psi_k\rangle\langle\psi_k|$ is a pure state and rank one, then we assume $|\psi_k\rangle\langle\psi_k|=\sum_{i,j} \sqrt{a_{ii}^{(k)}a_{jj}^{(k)}}e^{i\theta_{ij}^{(k)}}|i\rangle\langle j|$ with $\theta_{ij}^{(k)}=-\theta_{ji}^{(k)}$ and positive $a_{ii}^{(k)}$ for all $i$, $j$ and $k$. The $l_1$ norm of coherence of $|\psi_k\rangle\langle\psi_k|$ is $\sum_{i\neq j} \sqrt{a_{ii}^{(k)}a_{jj}^{(k)}}$,
which gives rise to the $l_1$ norm of coherence of assistance of $\rho$ as
\begin{equation}\label{eq upper bound hold eq}
\begin{array}{rcl}
C_a^{l_1}(\rho)&=&\sum_{i\neq j} \sum_k p_k  \sqrt{a_{ii}^{(k)}a_{jj}^{(k)}}\\
&\leq& \sum_{i\neq j} \sqrt{\sum_k p_k a_{ii}^{(k)}} \sqrt{\sum_k p_k a_{jj}^{(k)}}\\
&=& \sum_{i\neq j} \sqrt{\rho_{ii} \rho_{jj}},
\end{array}
\end{equation}
where we have utilized the Cauchy-Schwarz inequality $(\sum_k a_k b_k)^2 \leq {\sum_k a_k^2 \sum_k b_k^2}$ in the above inequality, and $\rho_{ii}=\sum_k p_k a_{ii}^{(k)}$ for the last equation.

For Eq. (\ref{eq upper bound hold eq}), the left hand side equals to the right hand side if and only if vectors $\vec{v}_k=(a_{11}^{(k)}, a_{22}^{(k)},\cdots,a_{nn}^{(k)})$ are parallel for all $k$. This implies that the diagonal entries of $|\psi_k\rangle\langle\psi_k|$ are all the same and are equal to that of quantum state $\rho$ for all $k$. Hence, the $l_1$ norm of coherence of assistance reaches its upper bound in Eq. (\ref{eq upper bound}) if and only if there exist a pure state decomposition $\{p_k,\  |\psi_k\rangle\}$ of $\rho$ such that $\Delta(|\psi_k\rangle\langle\psi_k|)=\Delta(\rho)$ for all $k$.
\end{proof}

Notice that if $\rho$ is an $n$ dimensional state, then $\sum_{i\neq j} \sqrt{\rho_{ii} \rho_{jj}}\leq n-1$ with equality if and only if all diagonal entries of $\rho$ are equal to $1/n$. So the upper bound for Theorem \ref{th ca upper bound} is always an improvement over the $n-1$ bound.
Theorem \ref{th ca upper bound} not only provides an upper bound for the $l_1$ norm of coherence of assistance, but also gives a necessary and sufficient condition for the pure state decomposition when the $l_1$ norm of coherence of assistance reaches its upper bound in Eq. (\ref{eq upper bound}). Coincidentally, this condition is the same as the condition that the relative entropy of coherence of assistance $C_a^{r}(\rho)$ reaches its upper bound $S(\Delta(\rho))$ \cite{E. Chitambar, M. J. Zhao}.
In other words,
$C_a^{l_1}(\rho)= \sum_{i\neq j}\sqrt{\rho_{ii}\rho_{jj}} \Longleftrightarrow C_a^{r}(\rho)=S(\Delta(\rho))$.
In Ref. \cite{B. Regula}, it shows the relative entropy of coherence of assistance $C_a^{r}(\rho)$ is equal to its upper bound $S(\Delta(\rho))$ for all two and three dimensional quantum states. That is, all quantum states $\rho$ in two and three dimensional systems have a pure state decomposition $\{p_k,\ |\psi_k\rangle\}$ such that the density matrix of each pure state $|\psi_k\rangle\langle\psi_k|$ has the same diagonal entries as the original density matrix, $\Delta(|\psi_k\rangle\langle \psi_k|)=\Delta(\rho)$ for all $k$. Such a decomposition is an optimal decomposition reaching the upper bound in Theorem \ref{th ca upper bound}. Therefore, the $l_1$ norm of coherence of assistance is equal to its upper bound in Eq. (\ref{eq upper bound}) for all two and three dimensional quantum states. Unfortunately, this result is not true for four dimensional system as one counterexample has been pointed out in Ref. \cite{E. Chitambar}.

\begin{corollary}\label{th qubit ca}
In two and three dimensional systems, the $l_1$ norm of coherence of assistance of state $\rho=\sum_{i,j} \rho_{ij}|i\rangle \langle j|$ is
$C_a^{l_1}(\rho)=\sum_{i\neq j}\sqrt{\rho_{ii}\rho_{jj}}$.
\end{corollary}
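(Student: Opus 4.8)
The plan is to read the corollary straight off Theorem~\ref{th ca upper bound}. That theorem already supplies the inequality $C_a^{l_1}(\rho)\le\sum_{i\ne j}\sqrt{\rho_{ii}\rho_{jj}}$ for every state, and, more importantly, it reduces the question of equality to a purely structural one: equality holds precisely when $\rho$ admits a pure-state decomposition $\{p_k,|\psi_k\rangle\}$ with $\Delta(|\psi_k\rangle\langle\psi_k|)=\Delta(\rho)$ for all $k$. So the whole task collapses to exhibiting such an ``equal-diagonal'' ensemble for every $\rho$ with $\dim\rho\in\{2,3\}$. As noted just above the corollary, this is exactly the property---established in Ref.~\cite{B. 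Regula} in the course of proving $C_a^{r}(\rho)=S(\Delta(\rho))$ for all qubit and qutrit states---that every two- or three-dimensional density matrix enjoys. Inserting that ensemble into the equality case of Theorem~\ref{th ca upper bound} then yields $C_a^{l_1}(\rho)=\sum_{i\ne j}\sqrt{\rho_{ii}\rho_{jj}}$, which is the assertion; so the shortest route is simply to invoke Ref.~\cite{B. Regula} and Theorem~\ref{th ca upper bound} together.

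For a self-contained treatment one must produce the equal-diagonal decomposition by hand. First clear away degeneracies: if $\rho_{ii}=0$ for some $i$, positivity of $\rho$ forces the whole $i$th row and column to vanish, so that coordinate drops out and the effective dimension decreases; hence assume all $\rho_{ii}>0$. Setting $D=\mathrm{diag}(\rho_{11},\dots,\rho_{nn})$ and $Z=D^{-1/2}\rho D^{-1/2}$---a positive semidefinite matrix with unit diagonal---one checks that an equal-diagonal decomposition of $\rho$ is the same thing as a representation $Z=\sum_k p_k\,u_k u_k^{\dagger}$ with every $u_k$ unimodular (all entries of modulus one), the $i$th entry of $u_k$ being $\langle i|\psi_k\rangle/\sqrt{\rho_{ii}}$. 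For $n=2$ this is trivial: an equal-weight mixture of two suitably phased unimodular vectors realizes any off-diagonal entry of modulus at most $\sqrt{\rho_{11}\rho_{22}}$. The substance is the case $n=3$, where one must show that every $3\times 3$ correlation matrix lies in the convex hull of the rank-one unimodular matrices $\{uu^{\dagger}: |u_i|=1\}$. I expect this to be the only real obstacle, and I would attack it either through the geometry of the (convex, semialgebraic) set of $3\times 3$ correlation matrices---identifying its extreme points and checking that they are rank-one and unimodular---or by a Carath\'eodory/continuity argument deforming a generic correlation matrix to a rank-one one while holding the diagonal fixed. It is here that dimensionality is decisive: the analogous statement fails in dimension four, as the counterexample of Ref.~\cite{E. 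Chitambar} shows.

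Once the equal-diagonal ensemble is available for all $n\le 3$, Theorem~\ref{th ca upper bound} applies verbatim and gives $C_a^{l_1}(\rho)=\sum_{i\ne j}\sqrt{\rho_{ii}\rho_{jj}}$. As a byproduct, the equivalence $C_a^{l_1}(\rho)=\sum_{i\ne j}\sqrt{\rho_{ii}\rho_{jj}}\Longleftrightarrow C_a^{r}(\rho)=S(\Delta(\rho))$ recorded after Theorem~\ref{th ca upper bound} then holds with both sides being unconditionally true in two and three dimensions, so the qutrit evaluation of the $l_1$ norm of coherence of assistance and of the relative entropy of coherence of assistance stand or fall together.
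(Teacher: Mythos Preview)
Your proposal is correct and follows essentially the same route as the paper: combine the equality criterion of Theorem~\ref{th ca upper bound} with the fact, imported from Ref.~\cite{B. Regula}, that every two- or three-dimensional state admits an equal-diagonal pure-state decomposition. The additional self-contained sketch via correlation matrices is a reasonable elaboration but goes beyond what the paper itself supplies.
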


For two dimensional systems, we can give an optimal decomposition for all quantum states using the following lemma \cite{M. J. Zhao}.

\begin{lemma}\label{lemma different magnitude}
For any $2\times 2$ Hermitian matrix $
A=\left(
\begin{array}{ccccccc}
a_{11} & a_{12}\\
a_{12}^* & a_{22}
\end{array}
\right)$, it has a decomposition as
\begin{equation}
A= p_0  |\psi_0\rangle\langle\psi_0|+p_1  |\psi_1\rangle\langle\psi_1|,
\end{equation}
where
\begin{equation}\label{eq in lemma different magnitude}
\begin{array}{rcl}
|\psi_0\rangle&=&\sqrt{a_{11}}|1\rangle + \sqrt{a_{22}}e^{-{\rm i}\arg(a_{12})}|2\rangle,\\
|\psi_1\rangle&=&\sqrt{a_{11}}|1\rangle - \sqrt{a_{22}}e^{-{\rm i}\arg(a_{12})}|2\rangle,
\end{array}
\end{equation}
and $p_0=\frac{1}{2}(1+|a_{12}|/\sqrt{a_{11}a_{22}})$, $p_1=\frac{1}{2}(1-|a_{12}|/\sqrt{a_{11}a_{22}})$ for nonzero $a_{11}$ and $a_{22}$, $\arg(a_{12})$ is the argument of $a_{12}$.
\end{lemma}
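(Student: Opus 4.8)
The plan is to prove the identity by a direct matrix computation, since the candidate ensemble in Eq.~(\ref{eq in lemma different magnitude}) is written down explicitly; no abstract decomposition theorem is required. First I would introduce the shorthand $\alpha=\sqrt{a_{11}}$, $\beta=\sqrt{a_{22}}$, $\phi=\arg(a_{12})$, so that $a_{12}=|a_{12}|\,e^{\mathrm{i}\phi}$, and expand the two rank-one operators:
\[
|\psi_0\rangle\langle\psi_0|=\begin{pmatrix}\alpha^2 & \alpha\beta\, e^{\mathrm{i}\phi}\\ \alpha\beta\, e^{-\mathrm{i}\phi} & \beta^2\end{pmatrix},\qquad
|\psi_1\rangle\langle\psi_1|=\begin{pmatrix}\alpha^2 & -\alpha\beta\, e^{\mathrm{i}\phi}\\ -\alpha\beta\, e^{-\mathrm{i}\phi} & \beta^2\end{pmatrix}.
\]
The off-diagonal phase comes out as $e^{\mathrm{i}\phi}$ (rather than $e^{-\mathrm{i}\phi}$) because the $(1,2)$ entry of $|\psi_0\rangle\langle\psi_0|$ equals $\sqrt{a_{11}}\cdot\overline{\sqrt{a_{22}}\,e^{-\mathrm{i}\phi}}$; keeping this conjugation straight is essentially the only bookkeeping point to watch.

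Next I would assemble $p_0|\psi_0\rangle\langle\psi_0|+p_1|\psi_1\rangle\langle\psi_1|$. Since $p_0+p_1=1$, its diagonal entries are $\alpha^2=a_{11}$ and $\beta^2=a_{22}$ outright. Its $(1,2)$ entry equals $(p_0-p_1)\,\alpha\beta\, e^{\mathrm{i}\phi}$, and substituting $p_0-p_1=|a_{12}|/\sqrt{a_{11}a_{22}}=|a_{12}|/(\alpha\beta)$ reduces it to $|a_{12}|\,e^{\mathrm{i}\phi}=a_{12}$; the $(2,1)$ entry is then $a_{12}^{*}$ by Hermiticity. Hence $p_0|\psi_0\rangle\langle\psi_0|+p_1|\psi_1\rangle\langle\psi_1|=A$, which is the assertion.

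Finally I would clear up the side conditions. The hypothesis $a_{11},a_{22}\neq 0$ is used only to make $p_0,p_1$ well defined; if $a_{12}=0$ one may assign $\arg(a_{12})$ any fixed value (e.g.\ $0$), whereupon $p_0=p_1=\tfrac{1}{2}$ and the identity persists. When $A$ is positive semidefinite --- the case relevant to the applications of the lemma --- one has $|a_{12}|\le\sqrt{a_{11}a_{22}}$, so $p_0,p_1\in[0,1]$ and the right-hand side is a genuine convex mixture; moreover $\langle\psi_k|\psi_k\rangle=a_{11}+a_{22}=\mathrm{tr}\,A$, so when $\mathrm{tr}\,A=1$ the $|\psi_k\rangle$ are unit vectors and $\{p_k,|\psi_k\rangle\}$ is a bona fide pure-state decomposition. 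There is no genuine obstacle in this argument: it is an elementary verification, and the only care needed is the phase convention in the degenerate case $a_{12}=0$ together with the conjugation in expanding $|\psi_0\rangle\langle\psi_0|$.
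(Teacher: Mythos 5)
Your verification is correct: the expansion of the rank-one projectors (with the conjugation giving $e^{+\mathrm{i}\phi}$ in the $(1,2)$ entry), the use of $p_0+p_1=1$ for the diagonal and $p_0-p_1=|a_{12}|/\sqrt{a_{11}a_{22}}$ for the off-diagonal, and the remarks on normalization and the degenerate case $a_{12}=0$ all check out. Note that the paper itself gives no proof of this lemma --- it is quoted from Ref.~[18] --- so there is no in-paper argument to compare against; the direct computation you give is the natural and expected one, and it also makes explicit the implicit hypothesis that $a_{11},a_{22}$ must be positive (not merely nonzero) for $\sqrt{a_{11}},\sqrt{a_{22}}$ to be real, which is automatic in the paper's application to principal $2\times 2$ blocks of density matrices.
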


This lemma gives a pure state decomposition for all $2\times 2$ Hermitian matrices including $2\times 2$ density matrices. Such a decomposition is an optimal decomposition for both the $l_1$ norm of coherence of assistance and the relative entropy of coherence of assistance.

\section{A Strict $l_1$ Norm of Coherence Inequality for Mixed States}

Coherence distillation is one process that extracts pure maximal coherence from a mixed state by incoherent operations \cite{A. Winter}. This process gives an operational way to obtain maximal coherence as a resource, which requires many copies of quantum state experimentally and gets the maximally coherent state with some probability. It is proved that there is no bound coherence and all coherent states are distillable \cite{A. Winter}. Another operational way to obtain the coherence resource is the assisted coherence distillation
with the assistance of another party's local measurement and one way classical communication.
Here we shall prove that all mixed coherence can be increased in this scenario.

\begin{lemma}\label{lemma different decomposition}
Suppose $\rho=\sum_{l=1}^n \lambda_l |\psi_l\rangle \langle \psi_l|$ and $\rho=\sum_{k=1}^n p_k |\phi_k\rangle \langle \phi_k|$ are two arbitrary pure state decompositions of given quantum state $\rho$ with $\sum_{l=1}^n \lambda_l=\sum_{k=1}^n p_k=1$, $0\leq \lambda_l\leq 1$, $0\leq p_k\leq 1$ for $l,k=1,\cdots,n$.
Then these two pure decompositions are related by a unitary transformation:
\begin{equation}\label{eq relation between two decomposition}
\sqrt{p_k}|\phi_k\rangle=\sum_{l=1}^n U_{kl} \sqrt{\lambda_l}|\psi_l\rangle,\ \ \ k=1,\cdots,n,
\end{equation}
where $U=(U_{kl})$ is a unitary transformation \cite{E. Sch}.
\end{lemma}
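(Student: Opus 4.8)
The plan is to route both decompositions through the spectral decomposition of $\rho$. First I would pass to the unnormalized vectors $|\tilde\psi_l\rangle=\sqrt{\lambda_l}\,|\psi_l\rangle$ and $|\tilde\phi_k\rangle=\sqrt{p_k}\,|\phi_k\rangle$, so that the hypothesis becomes $\rho=\sum_{l}|\tilde\psi_l\rangle\langle\tilde\psi_l|=\sum_{k}|\tilde\phi_k\rangle\langle\tilde\phi_k|$, and write $\rho=\sum_{m=1}^{r}\mu_m|e_m\rangle\langle e_m|$ with $\mu_m>0$, $\{|e_m\rangle\}$ orthonormal, $r=\mathrm{rank}(\rho)$, padding with arbitrary orthonormal $|e_m\rangle$ for $m=r+1,\dots,n$ and setting $\mu_m=0$ there (this is where $n\ge\mathrm{rank}(\rho)$, guaranteed since each decomposition has $n$ terms, is used). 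The first observation is that every $|\tilde\psi_l\rangle$ lies in the range of $\rho$: if $|\chi\rangle$ is orthogonal to that range then $0=\langle\chi|\rho|\chi\rangle=\sum_l|\langle\chi|\tilde\psi_l\rangle|^2$, forcing $\langle\chi|\tilde\psi_l\rangle=0$. Hence one may expand $|\tilde\psi_l\rangle=\sum_{m=1}^{r}c_{lm}|e_m\rangle$.

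Next I would compare coefficients in the $\{|e_m\rangle\}$ basis. Substituting the expansion into $\rho=\sum_l|\tilde\psi_l\rangle\langle\tilde\psi_l|$ and matching against $\sum_m\mu_m|e_m\rangle\langle e_m|$ gives $\sum_l c_{lm}\overline{c_{lm'}}=\mu_m\delta_{mm'}$. Putting $D_{lm}=c_{lm}/\sqrt{\mu_m}$ for $m\le r$, this says the $n\times r$ matrix $(D_{lm})$ has orthonormal columns, so it extends to an $n\times n$ unitary $D=(D_{lm})$; with the padding convention $\sqrt{\mu_m}=0$ for $m>r$ we get $|\tilde\psi_l\rangle=\sum_{m=1}^{n}D_{lm}\sqrt{\mu_m}\,|e_m\rangle$. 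Running the identical argument on the second decomposition produces a unitary $E$ with $|\tilde\phi_k\rangle=\sum_{m=1}^{n}E_{km}\sqrt{\mu_m}\,|e_m\rangle$.

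Finally I would invert the first relation and substitute. Multiplying $|\tilde\psi_l\rangle=\sum_{m'\le r}D_{lm'}\sqrt{\mu_{m'}}\,|e_{m'}\rangle$ by $\overline{D_{lm}}$ and summing over $l$, the unitarity $\sum_l\overline{D_{lm}}D_{lm'}=\delta_{mm'}$ yields $\sqrt{\mu_m}\,|e_m\rangle=\sum_l\overline{D_{lm}}\,|\tilde\psi_l\rangle$ for $m\le r$, while for $m>r$ the same computation gives $0$ on both sides, so the padded terms neither spoil nor are needed for the identities. Therefore $|\tilde\phi_k\rangle=\sum_{m=1}^{n}E_{km}\sqrt{\mu_m}\,|e_m\rangle=\sum_l\big(\sum_{m=1}^{n}E_{km}\overline{D_{lm}}\big)|\tilde\psi_l\rangle$, so $U:=ED^{\dagger}$ works and is unitary as a product of unitaries. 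Undoing the normalization recovers Eq. (\ref{eq relation between two decomposition}).

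The argument is mostly bookkeeping; the only point requiring genuine care is the handling of zero eigenvalues, namely completing the rank-$r$ coefficient matrices to honest $n\times n$ unitaries and checking that the terms with $\mu_m=0$ are harmless. (A more conceptual alternative is to purify $\rho$ on an $n$-dimensional ancilla and note that the two decompositions correspond to two orthonormal ancilla bases, which differ by a unitary — the Hughston--Jozsa--Wootters picture — but the linear-algebra version above is self-contained and matches the reference \cite{E. Sch}.)
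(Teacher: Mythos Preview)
Your argument is correct and is the standard linear-algebra proof of the Schr\"odinger--HJW theorem; the handling of the rank deficiency via padding is done carefully. However, there is nothing to compare against: the paper does not prove this lemma at all but simply quotes it as a known result with the citation \cite{E. Sch}, so your self-contained proof goes beyond what the paper provides.
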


In fact the normalization condition for $\rho$ in this Lemma is not necessary and $\{|\psi_i\rangle\}$ and $\{|\phi_k\rangle\}$ are not necessary to be normalized.

\begin{theorem}\label{th increase coh}
For arbitrary mixed quantum state $\rho=\sum_{i,j=1}^n \rho_{ij}|i\rangle \langle j|$, it has
\begin{equation}\label{eq strict ineq}
C_a^{l_1}(\rho)>C_{l_1}(\rho).
\end{equation}
\end{theorem}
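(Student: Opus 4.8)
The plan is a proof by contradiction, built on the fact—already recorded in the operational discussion—that, by convexity of $C_{l_1}$, one has $C_a^{l_1}(\rho)\ge C_{l_1}(\rho)$ for every $\rho$; the theorem is then exactly the claim that this inequality is never tight for mixed $\rho$. The incoherent case is immediate—there $C_{l_1}(\rho)=0$, while a mixed state is never an incoherent pure state, so $C_a^{l_1}(\rho)>0$—so I take $\rho$ mixed and suppose, toward a contradiction, that $C_a^{l_1}(\rho)=C_{l_1}(\rho)$.

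First I would extract the equality condition. For any pure-state decomposition $\rho=\sum_k q_k|\chi_k\rangle\langle\chi_k|$, $|\chi_k\rangle=\sum_i c_i^{(k)}|i\rangle$, the triangle inequality gives
\begin{equation*}
\sum_k q_k C_{l_1}(|\chi_k\rangle)=\sum_{i\neq j}\sum_k q_k\,|c_i^{(k)}|\,|c_j^{(k)}|\ \ge\ \sum_{i\neq j}\Big|\sum_k q_k\, c_i^{(k)}\overline{c_j^{(k)}}\Big|=\sum_{i\neq j}|\rho_{ij}|=C_{l_1}(\rho),
\end{equation*}
with equality only if, for every pair $i\neq j$, all the nonzero numbers $q_k c_i^{(k)}\overline{c_j^{(k)}}$ share a common argument; call a decomposition with this property \emph{aligned}. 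Since the average of $C_{l_1}$ over any decomposition lies between $C_{l_1}(\rho)$ and $C_a^{l_1}(\rho)=C_{l_1}(\rho)$, every decomposition of $\rho$ must be aligned, so it suffices to exhibit one that is not.

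For this I would use the spectral decomposition $\rho=\sum_{l=1}^r\lambda_l|e_l\rangle\langle e_l|$ ($\lambda_l>0$, $r\ge2$), fix two eigenvectors $|e_1\rangle,|e_2\rangle$, and—via Lemma~\ref{lemma different decomposition} or a direct check—pass to the decompositions in which $\lambda_1|e_1\rangle\langle e_1|+\lambda_2|e_2\rangle\langle e_2|$ is replaced by $\tfrac12|\phi_+^\omega\rangle\langle\phi_+^\omega|+\tfrac12|\phi_-^\omega\rangle\langle\phi_-^\omega|$ with $|\phi_\pm^\omega\rangle=\sqrt{\lambda_1}|e_1\rangle\pm\omega\sqrt{\lambda_2}|e_2\rangle$, $|\omega|=1$, the remaining $|e_l\rangle$ being left intact. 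Put $S_m=\{i:e_m^{(i)}\ne0\}$. If $S_1\cap S_2=\varnothing$, then at any $i_0\in S_1$, $j_0\in S_2$ the $(i_0,j_0)$-contributions of $|\phi_+^{1}\rangle$ and $|\phi_-^{1}\rangle$ are $\sqrt{\lambda_1\lambda_2}\,e_1^{(i_0)}\overline{e_2^{(j_0)}}$ and its negative—nonzero and oppositely directed—so that decomposition is not aligned. If $|S_1\cap S_2|=1$ then $\langle e_1|e_2\rangle\ne0$, absurd. If $|S_1\cap S_2|\ge2$: alignment of the spectral decomposition forces $\arg e_1^{(i)}-\arg e_2^{(i)}$ to equal a constant $\delta$ on $S_1\cap S_2$; alignment of the $\omega=1$ and $\omega={\rm i}$ families then forces the cross terms $\sqrt{\lambda_1\lambda_2}\,(e_1^{(i)}\overline{e_2^{(j)}}\pm e_2^{(i)}\overline{e_1^{(j)}})$ to be, respectively, a real and an imaginary multiple of the common direction of $e_1^{(i)}\overline{e_1^{(j)}}$ and $e_2^{(i)}\overline{e_2^{(j)}}$; since $\sin\delta$ and $\cos\delta$ cannot both vanish, taking the corresponding parts pins $|e_2^{(i)}|/|e_1^{(i)}|$ to a constant on $S_1\cap S_2$, so $e_2^{(i)}$ is a fixed nonzero multiple of $e_1^{(i)}$ there and again $\langle e_1|e_2\rangle\ne0$, absurd. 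Hence some decomposition of $\rho$ fails to be aligned, and $C_a^{l_1}(\rho)>C_{l_1}(\rho)$.

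The only delicate point is the last case, and more broadly the bookkeeping of the degeneracies visible in $S_m$—coordinates at which an eigenvector component vanishes, pairs with $\rho_{ij}=0$, and the alternative $\sin\delta=0$ vs.\ $\cos\delta=0$ inside it—which I would organize so that every degenerate configuration is funneled into one of the two clean contradictions above: $\langle e_1|e_2\rangle\ne0$, or non-alignment at a cross coordinate. In the completely generic situation (all $\rho_{ij}\ne0$, all $e_l^{(i)}\ne0$) the contradiction already drops out of the orthonormality of $|e_1\rangle$ and $|e_2\rangle$ alone, so that is where I expect the real—but purely organizational—effort to lie.
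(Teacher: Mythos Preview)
Your argument is correct, but it is organized quite differently from the paper's. The paper proceeds constructively: from the spectral decomposition it selects two eigenvectors $|\psi_1\rangle,|\psi_2\rangle$ and two coordinates $i,j$ for which the $2$-vectors $(a_i^{(1)},a_j^{(1)})$ and $(a_i^{(2)},a_j^{(2)})$ are linearly independent (this is possible because $[\,|\psi_1\rangle\ |\psi_2\rangle\,]$ has rank~$2$). If $a_i^{(1)*}a_j^{(1)}$ and $a_i^{(2)*}a_j^{(2)}$ already have different arguments, the spectral decomposition itself gives the strict inequality; otherwise the paper applies Lemma~\ref{lemma different magnitude} to the $2\times2$ block to rotate to a decomposition in which they do, reducing to the first case. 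So the paper localizes to a single off-diagonal position and a single $2\times2$ rotation, with Lemma~\ref{lemma different magnitude} doing the work.

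Your route is by contradiction: equality would force \emph{every} decomposition to be aligned, and you test this against the one-parameter family $|\phi_\pm^\omega\rangle$, doing casework on the overlap $S_1\cap S_2$ of the eigenvector supports to reach either a visible non-alignment or $\langle e_1|e_2\rangle\neq0$. The alignment criterion is a clean structural observation and your argument is self-contained (no appeal to Lemma~\ref{lemma different magnitude}); the price is the support casework and the degeneracy bookkeeping you flag. The paper's version is shorter precisely because the preliminary ``not parallel'' choice of coordinates collapses your three cases into one $2\times2$ problem that Lemma~\ref{lemma different magnitude} handles uniformly.
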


\begin{proof}
Note that the $l_1$ norm of coherence is a convex function, $\sum_i p_i C_{l_1}(\rho_i)\geq C_{l_1}(\sum_i p_i\rho_i)$, hence it is obvious that $C_a^{l_1}(\rho)\geq C_{l_1}(\rho)$. Next we prove that this inequality is strict.

Suppose $\rho=\sum_{s=1}^n \lambda_s |\psi_s\rangle \langle \psi_s|$ is the spectral decomposition with eigenstate $|\psi_s\rangle=\sum_{j=1}^n a_j^{(s)} |j\rangle$, $\sum_{s=1}^n \lambda_s=1$, $0\leq \lambda_s\leq 1$, $s=1,\cdots, n$.
Since $\rho$ is mixed, considering its two nonzero eigenvalues and corresponding eigenstates, one can pick out two two dimensional subvectors from these eigenstates respectively such that these subvectors are not parallel. Without loss of generality, we assume
(1) $\lambda_1$ and $\lambda_2$ are nonzero.  (2) two unnormalized two dimensional vectors composed by the first two components of $|{\psi}_1\rangle$ and $|{\psi}_2\rangle$ denoted by $|\tilde{\psi}_1\rangle=(a_1^{(1)},a_2^{(1)})^T$ and $|\tilde{\psi}_2\rangle=(a_1^{(2)},a_2^{(2)})^T$ are not parallel, where superscript $T$ means transposition.

The first case, if $a_1^{(1)*}a_2^{(1)}$ and $a_1^{(2)*}a_2^{(2)}$ have different arguments, then
\begin{equation}\label{eq proof different magnitude}
\begin{array}{rcl}
C_a^{l_1}(\rho)&\geq& \sum_{i=1}^n \lambda_s C_{l_1}(|\psi_s\rangle)\\
&=&  \sum_{i=1}^n \sum_{i\neq j} \lambda_s  |a_i^{(s)*}a_j^{(s)}|\\
&>&   \sum_{i\neq j}  |\sum_{s=1}^n  \lambda_s a_i^{(s)*}a_j^{(s)}|\\
&=&  \sum_{i\neq j}  |\rho_{ij}|\\
&=&C_{l_1}(\rho).
\end{array}
\end{equation}

The second case, if $a_1^{(1)*}a_2^{(1)}$ and $a_1^{(2)*}a_2^{(2)}$ have the same arguments.
Let
\begin{equation}
U=\sum_{s,t=1,2} U_{st} |\psi_s\rangle \langle \psi_t|+\sum_{s=3}^n |\psi_s\rangle \langle \psi_s|,
\end{equation}
be a unitary transformation and
\begin{equation}\label{eq two dim U}
\tilde{U}=\left(
\begin{array}{ccccccc}
U_{11} & U_{12}\\
U_{21} & U_{22}
\end{array}
\right)
\end{equation}
be a two dimensional unitary transformation induced by $U$. Under the transformation of $U$, we can get another pure state decomposition of $\rho$, $\rho=\sum_{k=1}^n p_k |\phi_k\rangle \langle \phi_k|$, with pure state $|\phi_k\rangle=\sum_{i=1}^n b_i^{(k)} |i\rangle$, $\sum_{k=1}^n p_k=1$, $0\leq p_k\leq 1$, $k=1,\cdots,n$, which is related with $\{|\psi_s\rangle\}$ according to Eq. (\ref{eq relation between two decomposition}).
Denote
\begin{equation}
|\tilde{\phi}_k\rangle=(b_1^{(k)},b_2^{(k)})^T
\end{equation}
as the unnormalized vector composed of the first two entries of $|{\phi}_k\rangle$ with $k=1,2$. Then $\{p_k,\  |\tilde{\phi}_k\rangle\}_{k=1,2}$ and $\{\lambda_s,\  |\tilde{\psi}_s\rangle\}_{s=1,2}$ can be regarded as two decompositions of $2\times 2$ Hermitian matrix $\tilde{A}$ composed by the entries of the Hermitian matrix $A=\sum_{s=1,2} \lambda_s |{\psi}_s\rangle \langle {\psi}_s|=\sum_{k=1,2} p_k |{\phi}_k\rangle \langle {\phi}_k|$ in positions (1,1) (1,2), (2,1), (2,2) as
\begin{eqnarray}
\tilde{A}=\sum_{s=1,2} \lambda_s |\tilde{\psi}_s\rangle \langle \tilde{\psi}_s|=\sum_{k=1,2} p_k |\tilde{\phi}_k\rangle \langle \tilde{\phi}_k|.
\end{eqnarray}
These two decompositions $\{p_k,\  |\tilde{\phi}_k\rangle\}_{k=1,2}$ and $\{\lambda_s,\  |\tilde{\psi}_s\rangle\}_{s=1,2}$ are connected by the unitary transformation $\tilde{U}$ in Eq. (\ref{eq two dim U}) by relation Eq. (\ref{eq relation between two decomposition}).
Since unnormalized vectors $|\tilde{\psi}_1\rangle$ and $|\tilde{\psi}_2\rangle$ are different, the two dimensional Hermitian matrix $\tilde{A}$ is full rank. According to Lemma \ref{lemma different magnitude}, there exists a unitary transformation $\tilde{U}$ such that $\{p_k,\  |\tilde{\phi}_k\rangle\}_{k=1,2}$ is a decomposition in form of Eq. (\ref{eq in lemma different magnitude}) with $0<p_k<1$, $k=1,2$. This means $b_1^{(1)*}b_2^{(1)}$ and $b_1^{(2)*}b_2^{(2)}$ have different arguments. Therefore we derive one decomposition $\{p_k,\  |{\phi}_k\rangle\}_{k=1}^n$ of $\rho$ satisfying the first case. As in Eq. (\ref{eq proof different magnitude}), one gets $C_a^{l_1}(\rho)\geq \sum_{k=1}^n p_k C_{l_1}(|\phi_k\rangle)>C_{l_1}(\rho)$.
\end{proof}

Since the $l_1$ norm of coherence is nonnegative and the $l_1$ norm of coherence of assistance of mixed state is strictly larger than it,
this implies the positivity of the $l_1$ norm of coherence of assistance.
\begin{corollary}
For mixed quantum state $\rho$, the $l_1$ norm of coherence of assistance is strictly positive, $C_a^{l_1}(\rho)>0$.
\end{corollary}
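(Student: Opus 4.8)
The plan is to read off the corollary as an immediate consequence of Theorem \ref{th increase coh}, using only the elementary positivity of the $l_1$ norm of coherence. There is no genuine obstacle here; the work has already been done in proving the strict inequality $C_a^{l_1}(\rho) > C_{l_1}(\rho)$, and the corollary just records one of its consequences.

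First I would recall that for any quantum state $\rho=\sum_{i,j}\rho_{ij}|i\rangle\langle j|$ the quantity $C_{l_1}(\rho)=\sum_{i\neq j}|\rho_{ij}|$ is a sum of moduli, hence $C_{l_1}(\rho)\geq 0$, with equality exactly when $\rho$ is diagonal in the reference basis. Then, since the hypothesis is that $\rho$ is mixed, Theorem \ref{th increase coh} applies and yields the strict inequality $C_a^{l_1}(\rho) > C_{l_1}(\rho)$. Chaining the two gives
\begin{equation}
C_a^{l_1}(\rho) > C_{l_1}(\rho) \geq 0,
\end{equation}
which is precisely the assertion.

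The only subtlety worth a remark is that the mixedness hypothesis is essential and is exactly what licenses the use of the strict inequality: a pure incoherent state has $C_a^{l_1}=C_{l_1}=0$, so strict positivity can fail without mixedness. Conversely, the corollary shows that even a mixed state that is incoherent ($C_{l_1}(\rho)=0$) has strictly positive coherence of assistance, as is already illustrated by Lemma \ref{lemma different magnitude}: a maximally mixed qubit in the reference basis admits the decomposition into $\frac{1}{\sqrt2}(|1\rangle\pm|2\rangle)$, each summand having $l_1$ norm of coherence $1$. Thus no further argument is needed beyond invoking Theorem \ref{th increase coh}.
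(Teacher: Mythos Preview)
Your proof is correct and matches the paper's own reasoning exactly: the paper derives the corollary in one sentence by noting that $C_{l_1}(\rho)\geq 0$ and that Theorem \ref{th increase coh} gives $C_a^{l_1}(\rho)>C_{l_1}(\rho)$ for mixed $\rho$. Your additional remarks about the necessity of mixedness and the qubit example are accurate commentary but go beyond what the paper records.
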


Here we show one protocol of obtaining a larger $l_1$ norm of coherence with the help of another party using local measurement and one way classical communication. Now we give one example in four dimensional system. Let $\rho_A=\frac{1}{4}\sum_{i=0}^3 |i\rangle\langle i|$ be a maximally mixed state held by Alice and the initial coherence in A is zero. As a purification with another party held by Bob
we first prepare a pure entangled state $|\psi\rangle_{AB}=\frac{1}{2}\sum_{i=0}^3 |\psi_i\rangle_A|i \rangle_B$, with maximally coherent state $|\psi_i\rangle=\frac{1}{2}\sum_j (-1)^{\delta_{ij}}|j\rangle$ for $\delta_{ij}=1$ when $i=j$ and zero otherwise as $\rho_A=\frac{1}{4}\sum_{i=0}^3 |i\rangle\langle i|=\frac{1}{4}\sum_{i=0}^3 |\psi_i\rangle_{AB}\langle\psi_i|$. Then Bob performs von Neumann measurements on the basis $\{|i\rangle_B\}$.
If Bob's component is projected to state $|i\rangle_B$, the state of Alice will be collapsed to $|\psi_i\rangle_A$, with maximal $l_1$ norm of coherence, $i=0,1,2,3$.
After receiving Bob's measurement outcomes via
classical communication channel, Alice can obtain her state in a four state ensemble $\{p_i=\frac{1}{4}, |\psi_i\rangle\}$ with the $l_1$ norm of coherence $\frac{1}{4}\sum_{i=0}^3 C_{l_1}(|\psi_i\rangle_{AB}) =3$. Therefore the final $l_1$ norm of coherence for Alice is increased from $C_{l_1}(\rho_A)=0$ to the maximum $C_a^{l_1}(\rho)=3$.

\section{Relation Between the $l_1$ norm of coherence of Assistance and Entanglement}

The curious observation that the coherence of $\rho=\sum_{i,j} \rho_{ij}|i\rangle \langle j|$ is closely related to the entanglement of $\rho_{mc}=\sum_{i,j} \rho_{ij}|ii\rangle \langle jj|$ \cite{E. Rains} was first noticed in Ref. \cite{A. Winter}:
\begin{equation}
\rho=\sum_{i,j} \rho_{ij}|i\rangle \langle j|\ \longleftrightarrow \ \rho_{mc}=\sum_{i,j} \rho_{ij}|ii\rangle \langle jj|.
\end{equation}
For example, the coincidence of coherent cost and coherence of formation is identified with the coincidence of entanglement cost and entanglement of formation \cite{A. Winter}.
The relative entropy of coherence of assistance of $\rho$ is equal to the entanglement of assistance \cite{D. DiVincenzo} of $\rho_{mc}$,
$C_a^{r}(\rho)=E_a(\rho_{mc})$
with $E_a(\rho)=\max \sum_i p_i E(|\psi_i\rangle)=\max \sum_i p_i S(Tr_B(|\psi_i\rangle\langle\psi_i|))$,
where the maximization is taken over all pure state decompositions of $\rho=\sum_i p_i |\psi_i\rangle\langle\psi_i|$ \cite{E. Chitambar}. Similarly, we shall show the $l_1$ norm of coherence of assistance corresponds to the entanglement called the convex-roof extended negativity of assistance \cite{J. S. Kim}, which is defined as $N_a(\rho)=\max \sum_i p_i N(|\psi_i\rangle)$,
where the maximization is taken over all pure state decompositions of $\rho=\sum_i p_i |\psi_i\rangle\langle\psi_i|$, $N(\rho)$ is the negativity of $\rho$ which is a well known entanglement measure defined as the sum of all absolute values of negative eigenvalues of $\rho^{PT}$ where the superscript $PT$ denotes the partial transposition \cite{G. Vidal}.

\begin{theorem}
The $l_1$ norm of coherence of assistance of $\rho=\sum_{i,j} \rho_{ij}|i\rangle \langle j|$ is related to the convex-roof extended negativity of assistance of maximally correlated state $\rho_{mc}=\sum_{i,j} \rho_{ij}|ii\rangle \langle jj|$ as
\begin{equation}
C_a^{l_1}(\rho)=2N_a(\rho_{mc}).
\end{equation}
\end{theorem}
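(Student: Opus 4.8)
The plan is to reduce the claimed identity to a fact about pure states and then lift it through the natural correspondence between pure-state decompositions of $\rho$ and those of $\rho_{mc}$. The pure-state step is the identity $C_{l_1}(|\psi\rangle)=2N(|\psi_{mc}\rangle)$, where for a normalized $|\psi\rangle=\sum_i c_i|i\rangle$ the associated maximally correlated vector is $|\psi_{mc}\rangle=\sum_i c_i|ii\rangle$. On the one hand, $C_{l_1}(|\psi\rangle)=\sum_{i\neq j}|c_i||c_j|=\bigl(\sum_i|c_i|\bigr)^2-1$, using $\sum_i|c_i|^2=1$. On the other hand, $|\psi_{mc}\rangle$ is already written in Schmidt form with Schmidt coefficients $|c_i|$, and the partial transpose of $|\psi_{mc}\rangle\langle\psi_{mc}|$ decomposes into one-dimensional blocks on each $|ii\rangle$ (eigenvalue $|c_i|^2\ge 0$) and, for each pair $i<j$, an antidiagonal two-dimensional block on $\mathrm{span}\{|ij\rangle,|ji\rangle\}$ with off-diagonal entry $|c_i||c_j|$ (eigenvalues $\pm|c_i||c_j|$). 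Summing the moduli of the negative eigenvalues gives $N(|\psi_{mc}\rangle)=\sum_{i<j}|c_i||c_j|=\frac{1}{2}\bigl[\bigl(\sum_i|c_i|\bigr)^2-1\bigr]$, which is exactly $\frac{1}{2}C_{l_1}(|\psi\rangle)$.

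Next I would establish a weight-preserving bijection between pure-state decompositions of $\rho$ and of $\rho_{mc}$. Starting from $\rho=\sum_k p_k|\phi_k\rangle\langle\phi_k|$ with $|\phi_k\rangle=\sum_i c_i^{(k)}|i\rangle$, the ensemble $\{p_k,\ |\phi_k^{mc}\rangle\}$ with $|\phi_k^{mc}\rangle=\sum_i c_i^{(k)}|ii\rangle$ reproduces $\rho_{mc}$, since its $(ii,jj)$ entry is $\sum_k p_k c_i^{(k)}\overline{c_j^{(k)}}=\rho_{ij}=(\rho_{mc})_{ii,jj}$. Conversely, $\rho_{mc}$ has nonzero entries only in positions $(ii,jj)$, so its range lies in $\mathrm{span}\{|ii\rangle\}$ and every pure state occurring in any decomposition of $\rho_{mc}$ is forced to be of the maximally correlated form $\sum_i c_i|ii\rangle$; discarding the second tensor factor then returns a pure-state decomposition of $\rho$ with the same weights. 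Combining the bijection with the pure-state identity, $\sum_k p_k C_{l_1}(|\phi_k\rangle)=2\sum_k p_k N(|\phi_k^{mc}\rangle)$ for every decomposition, so the maximizations defining $C_a^{l_1}(\rho)$ and $N_a(\rho_{mc})$ run over matching sets with objectives matching up to the factor $2$; taking suprema of both sides yields $C_a^{l_1}(\rho)=2N_a(\rho_{mc})$.

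The step I expect to require the most care is the support argument: one must verify that \emph{no} decomposition of $\rho_{mc}$ can use a pure state lying outside $\mathrm{span}\{|ii\rangle\}$, for otherwise the correspondence would be merely injective and the supremum defining $N_a(\rho_{mc})$ could a priori overshoot $\frac{1}{2}C_a^{l_1}(\rho)$. Granting that containment, the remaining content is the short eigenvalue computation for the partial transpose and routine bookkeeping of weights and normalizations.
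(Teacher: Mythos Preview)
Your proposal is correct and follows essentially the same route as the paper: establish the pure-state identity $C_{l_1}(|\psi\rangle)=2N(|\psi_{mc}\rangle)$, observe that every pure state in any decomposition of $\rho_{mc}$ must lie in $\mathrm{span}\{|ii\rangle\}$, and use the resulting weight-preserving bijection of ensembles to equate the two maxima. The only difference is one of presentation: the paper cites external references for both the pure-state negativity formula and the Schmidt-form support claim, whereas you compute the partial-transpose eigenvalues and verify the range containment directly; your version is thus self-contained and slightly more explicit about the bijection being two-sided, but the underlying argument is the same.
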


\begin{proof}
Note that for maximally correlated state $\rho_{mc}$, its pure state decompositions are all in the Schmidt form $|\psi^\prime\rangle=\sum_i a_i|ii\rangle$ \cite{M. J. Zhao2008}. The negativity of $|\psi^\prime\rangle=\sum_i a_i|ii\rangle$ is related to the $l_1$ norm of coherence of $|\psi\rangle=\sum_i a_i|i\rangle$ as $2N(|\psi^\prime\rangle)=C_{l_1}(|\psi\rangle)=\sum_{i\neq j} |a_i^*a_j|$ \cite{H. Zhu, S. Rana}.
Therefore, if $\{ p_k,\  |\psi^{\prime}_k\rangle\}$ is an optimal decomposition for $\rho_{mc}$ such that $N_a(\rho_{mc})=\sum_k p_k N(|\psi^{\prime}_k\rangle)$ with $|\psi^{\prime}_k\rangle=\sum_i a_i^{(k)}|ii\rangle$. Then $\{ p_k,\  |\psi_k\rangle\}$ with $|\psi_k\rangle=\sum_i a_i^{(k)}|i\rangle$ is the optimal decomposition for $\rho$ such that $C_a^{l_1}(\rho)=\sum_k p_k C_{l_1}(|\psi_k\rangle)$.
\end{proof}

\section{Comparison between the $l_1$ norm of coherence of assistance and the relative entropy of coherence of assistance}

The $l_1$ norm of coherence of assistance and the relative entropy of coherence of assistance exhibit many similarities.
Parallel to the result in Theorem \ref{th increase coh}, we can prove the relative entropy of coherence of assistance is strictly larger than the original relative entropy of coherence for all mixed state.

\begin{theorem}
For any mixed quantum state $\rho$, it has
\begin{eqnarray}\label{eq strict ineq for r}
C_a^{r}(\rho)>C_{r}(\rho).
\end{eqnarray}
\end{theorem}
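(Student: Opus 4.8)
The plan is to argue in the spirit of the proof of Theorem \ref{th increase coh}. The inequality $C_a^{r}(\rho)\ge C_r(\rho)$ is immediate from convexity of $C_r$, so the real task is to rule out equality for mixed $\rho$ by producing one pure‑state decomposition whose average relative entropy of coherence beats $C_r(\rho)$. I would start from a spectral decomposition $\rho=\sum_s\lambda_s|\psi_s\rangle\langle\psi_s|$ with $\lambda_s>0$. Since $S(|\psi_s\rangle\langle\psi_s|)=0$ and $\Delta(\rho)=\sum_s\lambda_s\Delta(|\psi_s\rangle\langle\psi_s|)$, rearranging $C_r(\rho)=S(\Delta(\rho))-S(\rho)$ gives
\begin{equation}\label{eq rel gap}
\sum_s\lambda_s C_r(|\psi_s\rangle)-C_r(\rho)=S(\rho)-\Bigl[S(\Delta(\rho))-\sum_s\lambda_s S(\Delta(|\psi_s\rangle\langle\psi_s|))\Bigr].
\end{equation}
The bracket is the Holevo quantity of the classical ensemble $\{\lambda_s,\Delta(|\psi_s\rangle\langle\psi_s|)\}$, equivalently the mutual information between the eigenvalue index and the outcome of measuring $\rho$ in the reference basis; hence it lies in $[\,0,S(\rho)\,]$ — the upper bound re‑proving $C_a^{r}\ge C_r$ — and it equals $S(\rho)$ precisely when the diagonals $\Delta(|\psi_s\rangle\langle\psi_s|)$, i.e.\ the eigenvectors $|\psi_s\rangle$ with $\lambda_s>0$, have pairwise disjoint supports. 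This yields a dichotomy.

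If those eigenvectors do not all have pairwise disjoint supports, the bracket in (\ref{eq rel gap}) is strictly below $S(\rho)$, so the spectral decomposition already satisfies $\sum_s\lambda_s C_r(|\psi_s\rangle)>C_r(\rho)$ and we are done. Otherwise $\rho$ is block diagonal with rank‑one blocks; since $\rho$ is mixed there are two nonzero eigenvalues $\lambda_1,\lambda_2$ whose eigenvectors $|\psi_1\rangle,|\psi_2\rangle$ sit on disjoint coordinate sets. I would then modify only this pair (this is the construction of Lemma \ref{lemma different magnitude} for the diagonal $2\times2$ block spanned by $|\psi_1\rangle,|\psi_2\rangle$): replace $\{\lambda_1,|\psi_1\rangle\},\{\lambda_2,|\psi_2\rangle\}$ by $|\phi_\pm\rangle=(\sqrt{\lambda_1}|\psi_1\rangle\pm\sqrt{\lambda_2}|\psi_2\rangle)/\sqrt{\lambda_1+\lambda_2}$ with weights $p_\pm=\tfrac12(\lambda_1+\lambda_2)$. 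The two cross terms cancel, so this is again a pure‑state decomposition of $\rho$, and disjointness of supports forces $\Delta(|\phi_+\rangle\langle\phi_+|)=\Delta(|\phi_-\rangle\langle\phi_-|)=\Delta(\lambda_1|\psi_1\rangle\langle\psi_1|+\lambda_2|\psi_2\rangle\langle\psi_2|)/(\lambda_1+\lambda_2)$, a mixture of two disjointly supported distributions. The grouping identity for entropy then gives $(\lambda_1+\lambda_2)S(\Delta(|\phi_+\rangle\langle\phi_+|))=(\lambda_1+\lambda_2)\,h\bigl(\tfrac{\lambda_1}{\lambda_1+\lambda_2}\bigr)+\lambda_1 C_r(|\psi_1\rangle)+\lambda_2 C_r(|\psi_2\rangle)$ with $h$ the binary entropy; since in this case $\sum_s\lambda_s C_r(|\psi_s\rangle)=C_r(\rho)$ by (\ref{eq rel gap}), the new decomposition has average relative entropy of coherence $C_r(\rho)+(\lambda_1+\lambda_2)\,h\bigl(\tfrac{\lambda_1}{\lambda_1+\lambda_2}\bigr)>C_r(\rho)$, hence $C_a^{r}(\rho)>C_r(\rho)$.

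The step I expect to be the crux is precisely this case split: unlike in the $l_1$ setting, the spectral decomposition is not always strictly better than $C_r(\rho)$ — when $\rho$ is block diagonal (in particular, incoherent) it merely attains $C_r(\rho)$ — so one must recognize that this is the only obstruction and supply the remedial two‑state re‑mixing; reading (\ref{eq rel gap}) as a mutual information is what makes both the characterization of the bad case and the construction transparent. The residual verifications — normalization of $|\phi_\pm\rangle$, vanishing of the cross terms, and the harmless non‑uniqueness of the spectral decomposition for degenerate $\rho$ (we only need one admissible decomposition) — are routine. A more streamlined but heavier alternative would bypass the split: the left side of (\ref{eq rel gap}) is the decrease of the Holevo quantity of $\{p_k,|\psi_k\rangle\langle\psi_k|\}$ under the dephasing channel $\Delta$, and its vanishing for every decomposition would, by the equality case of the data‑processing inequality (Petz recoverability), force $\Delta$ to preserve the trace norm of every operator supported in $\mathrm{range}(\rho)$, which is impossible once $\mathrm{rank}(\rho)\ge 2$ because $\Delta$ strictly contracts the trace norm of $|v\rangle\langle v|-|w\rangle\langle w|$ for a suitable orthonormal pair there.
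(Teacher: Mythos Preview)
Your proof is correct and follows essentially the same route as the paper's: bound $C_a^{r}(\rho)$ from below by the spectral--decomposition average, compare with $C_r(\rho)$ via the entropy inequality $S(\sum_s\lambda_s\sigma_s)\le H(\lambda)+\sum_s\lambda_s S(\sigma_s)$ applied to $\sigma_s=\Delta(|\psi_s\rangle\langle\psi_s|)$, identify the equality case (diagonals with pairwise disjoint supports), and then repair that case by remixing two eigenvectors. Your Holevo/mutual--information reading of (\ref{eq rel gap}) is just a repackaging of the same inequality the paper invokes.

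Where you differ is in the handling of the equality case, and here your version is actually more careful than the paper's. The paper claims that disjoint diagonal supports force every eigenvector to be incoherent (so that the bad case is exactly ``$\rho$ incoherent''); this pigeonhole step is only valid when $\rho$ has full rank, and fails e.g.\ for $\rho=\lambda_1|\psi_1\rangle\langle\psi_1|+\lambda_2|2\rangle\langle 2|$ with $|\psi_1\rangle=(|0\rangle+|1\rangle)/\sqrt{2}$, which is coherent yet has equality in the entropy bound. Your description ``block diagonal with rank--one blocks'' is the correct characterization in all ranks, and your explicit $|\phi_\pm\rangle$ construction together with the grouping identity gives the exact gain $(\lambda_1+\lambda_2)\,h\!\bigl(\tfrac{\lambda_1}{\lambda_1+\lambda_2}\bigr)$, whereas the paper only gestures at ``unitary transformations'' producing a coherent decomposition. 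The Petz--recoverability alternative you sketch at the end is a genuinely different (and heavier) argument that the paper does not pursue.
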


\begin{proof}
Suppose $\rho=\sum_{s=1}^m \lambda_s |\psi_s\rangle \langle \psi_s|$ is the spectral decomposition with $\sum_{s=1}^m \lambda_s=1$ and $0< \lambda_s\leq 1$. Then
\begin{equation}
\begin{array}{rcl}
C_a^{r}(\rho)&\geq& \sum_{s=1}^m \lambda_s C_r(|\psi_s\rangle \langle \psi_s|)\\
&=& \sum_{s=1}^m \lambda_s S(\Delta(|\psi_s\rangle \langle \psi_s|))\\
&\geq &  S(\sum_{s=1}^m \lambda_s\Delta(|\psi_s\rangle \langle \psi_s|))-H(\lambda_s)\\
&=& S(\Delta(\rho))-S(\rho)\\
&=&C_{r}(\rho),
\end{array}
\end{equation}
where $H(\lambda_s)=-\sum_{s=1}^m \lambda_s \log \lambda_s$, the third inequality is according to the property of entropy and becomes equality if and only if $\{\Delta(|\psi_s\rangle \langle \psi_s|)\}$ have support on orthogonal subspaces \cite{M. A. Nielsen}. Since $m$ projected eigenstates $\{\Delta(|\psi_s\rangle \langle \psi_s|)\}$ are in the $m$ dimensional space spanned by $\{ |i\rangle \langle i| \}_{i=1}^m$, so they are orthogonal if and only if they are all supported on one dimensional subspaces spanned by one of the elements in $\{ |i\rangle \langle i| \}_{i=1}^m$, which implies all eigenstates $\{|\psi_s\rangle \langle \psi_s|\}$ themselves are incoherent under the reference basis. Therefore, the third inequality becomes equality if and only if $\rho$ is incoherent, which means the third inequality is strict for all mixed coherent quantum states. In this case the relative entropy of coherence of assistance $C_a^{r}(\rho)$ is strictly larger than the relative entropy of coherence $C_{r}(\rho)$ for mixed coherent state $\rho$.

For mixed incoherent states, their relative entropy of coherence are all zero, $C_{r}(\rho)=0$. But, under the unitary transformations, we can get other coherent pure state decomposition with positive average relative entropy of coherence. This means that the relative entropy of coherence of assistance $C_a^{r}(\rho)$ is strictly larger than the relative entropy of coherence $C_{r}(\rho)$ for all mixed incoherent quantum states. Therefore, inequality (\ref{eq strict ineq for r}) holds true for all mixed states.
\end{proof}

\begin{corollary}
For any mixed quantum state $\rho$, the relative entropy of coherence of assistance is strictly positive, $C_a^{r}(\rho)>0$.
\end{corollary}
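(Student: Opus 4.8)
The plan is to deduce this corollary directly from the theorem immediately preceding it, in exact analogy with the corresponding corollary for the $l_1$ norm of coherence of assistance. The key observation is that the relative entropy of coherence $C_r$ is a bona fide coherence measure and hence nonnegative for every state: $C_r(\rho)\geq 0$, with equality precisely when $\rho$ is incoherent. Combining this with the strict inequality $C_a^{r}(\rho)>C_{r}(\rho)$ established for all mixed $\rho$, one obtains the chain $C_a^{r}(\rho)>C_{r}(\rho)\geq 0$, which is exactly the claim.

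Concretely, I would proceed in two short steps. First, recall (or briefly note) that $C_r$ is nonnegative, since $S(\Delta(\rho))\geq S(\rho)$ for any $\rho$ by the fact that the completely dephasing map $\Delta$ can only increase von Neumann entropy. Second, invoke the previous theorem to pass from $C_{r}(\rho)$ to the strictly larger quantity $C_a^{r}(\rho)$, yielding $C_a^{r}(\rho)>0$. No further estimates are needed.

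There is essentially no obstacle here: the corollary is an immediate consequence of the theorem already proved, and the only ingredient added is the elementary nonnegativity of $C_r$. If one wished to make the statement self-contained without citing the theorem, the mild point to be careful about is the case of mixed \emph{incoherent} states, where $C_r(\rho)=0$; there one must recall that the unitary freedom in the pure-state decompositions of a mixed state always produces a decomposition into coherent pure states, so that $\sum_k p_k C_r(|\psi_k\rangle)>0$ and hence $C_a^{r}(\rho)>0$. This is precisely the argument used in the proof of the preceding theorem, so reusing that theorem is the cleanest route.
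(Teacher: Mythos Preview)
Your proposal is correct and matches the paper's approach exactly: the corollary is stated without a separate proof, as it follows immediately from the preceding theorem via the chain $C_a^{r}(\rho)>C_{r}(\rho)\geq 0$, just as you describe (and in parallel with the $l_1$-norm corollary earlier in the paper).
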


The $l_1$ norm of coherence of assistance can be shown to be an upper bound for the relative entropy of coherence of assistance.

\begin{theorem}
For any quantum state $\rho$, the $l_1$ norm of coherence of assistance and the relative entropy of coherence of assistance satisfy the relation
\begin{equation}
C_a^{r}(\rho)\leq C_a^{l_1}(\rho).
\end{equation}
\end{theorem}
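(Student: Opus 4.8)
The plan is to reduce the inequality $C_a^r(\rho)\le C_a^{l_1}(\rho)$ to the corresponding pointwise inequality on pure states, namely $C_r(|\psi\rangle\langle\psi|)\le C_{l_1}(|\psi\rangle\langle\psi|)$, and then transport an optimal decomposition from one quantity to the other. First I would fix an optimal pure-state decomposition $\rho=\sum_k p_k|\psi_k\rangle\langle\psi_k|$ that achieves the maximum in the definition of $C_a^r$, so that $C_a^r(\rho)=\sum_k p_k C_r(|\psi_k\rangle\langle\psi_k|)$. The key fact I would invoke is the known pure-state comparison $C_r(|\psi\rangle\langle\psi|)\le C_{l_1}(|\psi\rangle\langle\psi|)$, which holds for every pure state: for a pure state with amplitudes $a_i$ (so $\psi_i=|a_i|^2$ are the diagonal entries) one has $C_r=H(\{|a_i|^2\})=S(\Delta(|\psi\rangle\langle\psi|))$ while $C_{l_1}=\sum_{i\neq j}|a_i||a_j|=\big(\sum_i|a_i|\big)^2-1$, and the inequality $H(\{|a_i|^2\})\le\big(\sum_i|a_i|\big)^2-1$ is a standard entropy/norm estimate (it already appears implicitly in the paper via the relation to robustness and via Ref.~\cite{S. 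Rana}, where the logarithmic $l_1$ norm bounds the relative entropy of coherence).

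Granting that pointwise bound, the argument is then a one-line comparison:
\begin{equation}
C_a^r(\rho)=\sum_k p_k C_r(|\psi_k\rangle\langle\psi_k|)\le \sum_k p_k C_{l_1}(|\psi_k\rangle\langle\psi_k|)\le \max\sum_k p_k C_{l_1}(|\psi_k\rangle\langle\psi_k|)=C_a^{l_1}(\rho),
\end{equation}
where the last inequality holds simply because the particular decomposition optimal for $C_a^r$ is only one competitor in the maximization defining $C_a^{l_1}$. No matching of the two optimizations is needed, since we are bounding a max by a max in the favourable direction.

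The main obstacle, and the only place requiring real work, is justifying the pure-state inequality $C_r(|\psi\rangle\langle\psi|)\le C_{l_1}(|\psi\rangle\langle\psi|)$ cleanly. I would establish it by writing $d=\sum_i|a_i|$ so that $C_{l_1}=d^2-1$, and noting $\sum_i|a_i|^2=1$; then I want $H(\{|a_i|^2\})\le d^2-1$. One route is to use concavity of $\log$ and Jensen, or equivalently to note that $H(\{|a_i|^2\})\le \log(\text{rank})\le \log\big(\sum_i|a_i|\big)^2 = 2\log d \le d^2-1$ using $2\log d\le d^2-1$ for $d\ge 1$; alternatively one can cite the sharper inequality from Ref.~\cite{S. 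Rana} that $C_r(\sigma)\le \log(1+C_{l_1}(\sigma))$ for all states $\sigma$, which combined with $\log(1+x)\le x$ gives $C_r\le C_{l_1}$ pointwise in one stroke. I would state the pure-state bound, cite \cite{S. Rana} for the logarithmic version, remark that $\log(1+x)\le x$, and then present the two-line decomposition-comparison above. The whole proof is short; the subtlety is purely in not overcomplicating the pure-state step.
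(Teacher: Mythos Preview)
Your approach is exactly the paper's: pick an optimal decomposition for $C_a^r$, apply the pure-state bound $C_r(|\psi\rangle)\le C_{l_1}(|\psi\rangle)$ from \cite{S. Rana}, and observe that the resulting average is a competitor in the maximization defining $C_a^{l_1}$. One caution: your first suggested route to the pure-state inequality, via $H(\{|a_i|^2\})\le\log(\text{rank})\le\log\big(\sum_i|a_i|\big)^2$, goes the wrong way, since Cauchy--Schwarz gives $\big(\sum_i|a_i|\big)^2\le n\sum_i|a_i|^2=n=\text{rank}$ rather than the reverse; stick with the $\log(1+C_{l_1})$ bound from \cite{S. Rana} together with $\log(1+x)\le x$, as you ultimately propose.
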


\begin{proof}
For any quantum state $\rho$, suppose $\{ p_k,\  |\psi_k\rangle\}$ is an optimal decomposition of $\rho$ such that $C_a^{r}(\rho)=\sum_k p_k C_r(|\psi_k\rangle)$. Since the relative entropy of coherence is bounded by the $l_1$ norm of coherence from above for all pure states \cite{S. Rana}, then $C_a^{r}(\rho)=\sum_k p_k C_r(|\psi_k\rangle)\leq \sum_k p_k C_{l_1}(|\psi_k\rangle)\leq C_a^{l_1}(\rho)$.
\end{proof}

Now we come to compare the $l_1$ norm of coherence of assistance and the relative entropy of coherence of assistance in the following table.
Although their original coherence measures are defined from different point of views, one uses the $l_1$ norm from mathematics and the other uses the entropy from information theory, their coherence of assistance exhibit many similarities including their definitions, their upper bounds with the saturation condition, their optimal pure state decompositions and their strict positivity for mixed states. The reason lies in the fact that they are both defined using the maximization as well as the property of the coherence measures.

\begin{table}[!h]
\tabcolsep 0pt
\vspace*{-12pt}
\newcommand{\tabincell}[2]{\begin{tabular}{@{}#1@{}}#2\end{tabular}}
\begin{center}
\def\temptablewidth{0.9\textwidth}
{\rule{\temptablewidth}{1pt}}
\begin{tabular*}
{\temptablewidth}{@{\extracolsep{\fill}}|c|c|c|}
& $C_a^{r}$ & $C_a^{l_1}$ \\   \hline
Definition & $C_a^{r}(\rho)=\max \sum_k p_k C_r(|\psi_k\rangle)$ &  $C_a^{l_1}(\rho)=\max \sum_k p_k C_{l_1}(|\psi_k\rangle)$ \\  \hline
Upper bound & $C_a^{r}(\rho)\leq S(\Delta(\rho))$
&  $C_a^{l_1}(\rho)\leq \sum_{i\neq j} \sqrt{\rho_{ii}\rho_{jj}}$ \\  \hline
\tabincell{c} {Analytical formula \\for qubit and qutrit state} & $C_a^{r}(\rho)= S(\Delta(\rho))$ &  $C_a^{l_1}(\rho)= \sum_{i\neq j} \sqrt{\rho_{ii}\rho_{jj}}$ \\  \hline
Range& $0\leq C_a^{r}(\rho) \leq \log n$ & $0\leq C_a^{l_1}(\rho)\leq n-1$  \\ \hline
Relation with original coherence &
 \tabincell{c} {$C_a^{r}(\rho)>C_r(\rho) \text{ if $\rho$ is mixed}$ \\
   $C_a^{r}(\rho)=C_r(\rho) \text{ if $\rho$ is pure}$ }&
   \tabincell{c} {$C_a^{l_1}(\rho)>C_{l_1}(\rho) \text{ if $\rho$ is mixed}$\\$C_a^{l_1}(\rho)=C_{l_1}(\rho) \text{ if $\rho$ is pure}$ } \\ \hline
Relation with Entanglement & $C_a^{r}(\rho)=E_a(\rho_{mc})$     & $C_a^{l_1}(\rho)=2N_a(\rho_{mc})$  \\ \hline
\end{tabular*}
 {\rule{\temptablewidth}{1pt}}
 \caption{Table of comparison between $C_a^{r}$ and $C_a^{l_1}$ for $n$ dimensional quantum state $\rho=\sum_{i,j=1}^n \rho_{ij}|i\rangle \langle j|$.
  }
 \end{center}
 \end{table}

Based on these results, we can divide all quantum states into three classes with respect to the relation between the coherence of assistance and the original coherence. Here we do not distinguish between the $l_1$ norm of coherence or the relative entropy of coherence because it does not matter. The first class $\mathcal{S}_1$ is the set of all pure incoherent states. This class of states is incoherent and it stays incoherent even under the help of another part with local measurements and one way classical communication. Such states satisfy $C_a(\rho)=C(\rho)=0$. The second class $\mathcal{S}_2$ is the set of all pure coherent states. This class of states is coherent, but its coherence can not be increased under the help of another part with local measurements and one way classical communication. Such states satisfy $C_a(\rho)=C(\rho)>0$. The third class $\mathcal{S}_3$ is the set of all mixed states. No matter whether this class of states is incoherent or coherent, their coherence can be increased under the help of another part with local measurements and one way classical communication. Such states satisfy $C_a(\rho)>C(\rho)\geq 0$.
In Ref. \cite{T. Ma}, the difference between the coherence of assistance and the original coherence is called the accessible coherence which measures the increased coherence. For quantum states in $\mathcal{S}_3$, the accessible coherence is always positive. One can gain more coherence if one knows the corresponding ensemble of the state.

\section{Conclusions and Discussions}

To summarize, we have proposed and systematically studied the $l_1$ norm of coherence of assistance. We have provided an upper bound for the $l_1$ norm of coherence of assistance in terms of the function of diagonal entries of quantum state. A necessary and sufficient condition when the $l_1$ norm of coherence of assistance reaches its upper bound has been shown. In the special case of two and three dimensional quantum states, the $l_1$ norm of coherence of assistance is always equal to this upper bound. After that, we have shown the $l_1$ norm of coherence of assistance as well as the relative entropy of coherence of assistance is strictly larger than the original coherence for mixed states. The relation between the $l_1$ norm of coherence of assistance and entanglement has been revealed. We give a comparison between the $l_1$ norm of coherence of assistance and the relative entropy of coherence of assistance at last.

Operationally the $l_1$ norm of coherence of assistance is the maximal $l_1$ norm of coherence one can gain with the help of another party's local measurement and one way classical communication. This may provide more available and useful resource for quantum information processing.
An experimental realization in linear optical systems for obtaining the maximal relative entropy of coherence for two dimensional quantum states in assisted distillation
protocol has been presented \cite{K. D. Wu}. We hope this work will promote the further study of coherence resource.

\bigskip
\noindent{\bf Acknowledgments}\, Ming-Jing Zhao thanks the Department of Mathematics and Statistics, University of Guelph, Canada for hospitality. This work is supported by the NSF of China (Grant Nos. 11401032) and the China Scholarship
Council (Grant No. 201808110022). The authors appreciate the
valuable suggestions and comments by the anonymous referee.

\end{document}